\newcommand{\A}{\mathbf{A}}
\newcommand{\B}{\mathbf{B}}
\newcommand{\C}{\mathbf{C}}
\newcommand{\G}{\mathbf{G}}
\newcommand{\D}{\mathbf{D}}
\newcommand{\N}{\mathbb{N}}
\newcommand{\K}{\mathbf{K}}
\newcommand{\X}{\mathbf{X}}
\newcommand{\Z}{\mathbb{Z}}
\newcommand{\CSP}{\mathrm{CSP}}
\newcommand{\PCSP}{\mathrm{PCSP}}
\newcommand{\compP}{{\sf{P}}}
\newcommand{\compNP}{{\sf{NP}}}
\newcommand{\set}[1]{[#1]}
\newcommand{\st}{\;:\;}
\newlength{\ldprobleft} \setlength{\ldprobleft}{0.045\textwidth} 
\newlength{\ldprobmid}  \setlength{\ldprobmid}{0.12\textwidth} 
\newlength{\ldprobright}\setlength{\ldprobright}{0.78\textwidth} 
\newcommand{\dproblem}[3]{
\begin{equation*}
\parbox{\textwidth}{
\begin{tabular}{ @{} p{\ldprobleft} p{\ldprobmid} p{\ldprobright} @{} }
& \multicolumn{2}{l}{\textbf{#1}} \\
& {\begin{minipage}[t]{\ldprobmid}Input:\vspace{1.5pt}\end{minipage}} & {\begin{minipage}[t]{\ldprobright}#2\vspace{1.5pt}\end{minipage}} \\ 
& {\begin{minipage}[t]{\ldprobmid}Output:\end{minipage}} & {\begin{minipage}[t]{\ldprobright}#3\end{minipage}} \\
\end{tabular}}
\end{equation*}
}
\keywords{Promise Constraint Satisfaction Problem, Constraint Satisfaction Problem, relational structure homomorphism, digraph, sandwich}
\begin{document}

\title{Small Promise CSPs that reduce to large CSPs}
\author{Alexandr Kazda\lmcsorcid{0000-0002-7338-037X}}[a]
\author{Peter Mayr\lmcsorcid{0000-0003-1163-313X}}[b]
\author{Dmitriy Zhuk\lmcsorcid{0000-0003-0047-5076}}[c]
\address{Department of Algebra, Faculty of Mathematics and Physics, Charles University, Prague, Czech Republic}
\email{alex.kazda@gmail.com}

\address{Department of Mathematics,
University of Colorado,
Boulder, Colorado, USA}
\email{mayr@colorado.edu}

\address{
HSE University, Russia and 
Lomonosov Moscow State University, Russia
}
\email{zhuk@intsys.msu.ru}

\begin{abstract}
 For relational structures $\A,\B$ of the same signature, the Promise Constraint Satisfaction
 Problem $\PCSP(\A,\B)$ asks whether a given input structure maps homomorphically to $\A$ or does not even map to $\B$.
 We are promised that the input satisfies exactly one of these two cases.

 If there exists a structure $\C$ with homomorphisms $\A\to\C\to\B$, then $\PCSP(\A,\B)$ reduces naturally to $\CSP(\C)$.
 To the best of our knowledge all known tractable PCSPs reduce to tractable CSPs in this way.
 However Barto~\cite{Ba:PMF} showed that some PCSPs over finite structures $\A,\B$ require solving CSPs over infinite $\C$.

 We show that even when such a reduction to some finite $\C$ is possible, this structure may become arbitrarily large.
 For every  integer $n>1$ and every prime $p$ we give $\A,\B$ of size $n$ with a single relation of arity $n^p$
 such that $\PCSP(\A,\B)$ reduces  via a chain of homomorphisms  $\A\to\C\to\B$ 
 to a tractable $\CSP$ over some $\C$ of size $p$ but not over any smaller structure.
 In a second family of examples, for every prime $p\geq 7$ we construct $\A,\B$ of size $p-1$ with a single
 ternary relation such that $\PCSP(\A,\B)$ reduces  via $\A\to\C\to\B$   to a tractable $\CSP$ over some $\C$ of size $p$ but not
 over any smaller structure.
 In contrast we show that if $\A,\B$ are graphs and $\PCSP(\A,\B)$ reduces to a tractable $\CSP(\C)$ for some finite digraph
 $\C$, then already $\A$ or $\B$ has a tractable $\CSP$.
 This extends results and answers a question of~\cite{DEMMNS}.
\end{abstract}
\maketitle

\section{Introduction}

 The \emph{Constraint Satisfaction Problem} ($\CSP$) for a fixed relational structure $\A$ can be formulated as the following
 homomorphism problem:
\dproblem{$\CSP(\A)$}{a relational structure $\X$}{yes, if there exists a homomorphism $\X\to\A$, \\ no, otherwise}

In~\cite{AGH:2ES14, AGH:2ES17} Austrin, Guruswami, and H\r{a}stad
introduced \emph{Promise Satisfaction Problems} ($\PCSP$) as approximations of
 $\CSP$. For relational structures $\A,\B$ of the same finite signature with a homomorphism $\A\to\B$, let
 \dproblem{$\PCSP(\A,\B)$}{a relational structure $\X$}{yes, if there exists a homomorphism $\X\to\A$, \\
   no, if there exists no homomorphism $\X\to\B$}
 Since we have a homomorphism $\A\to\B$ by assumption, the two alternatives 
 (1) $\X$ maps homomorphically to $\A$ or (2) $\X$ does not map homomorphically to $\B$
are mutually exclusive for any input $\X$. The \emph{promise} is that at least one
of the alternatives
 holds for $\X$.

 $\PCSP$s are motivated by open questions about the (in)approximability of SAT and graph coloring.
 The classical \emph{approximate graph coloring problem} for $r\leq s$ asks: given an $r$-colorable graph, find an $s$-coloring for
 it~\cite{GJ:CNOGC}.
 The decision version of this is to distinguish graphs that are $r$-colorable from those that are not even $s$-colorable.
 This is $\PCSP(\K_r,\K_s)$ for $\K_r,\K_s$ the complete graphs on $r,s$ vertices, respectively.
 It (and consequently the approximate graph coloring problem) has been conjectured to be $\compNP$-hard for all $3\leq r\leq s$.
 Even after more than 40 years of research this conjecture remains open in its full generality.
 
 Let $\A,\B,\C$ be relational structures of the same signature with homomorphisms $\A\to\C\to\B$. Then we say $\C$ is 
 \emph{sandwiched} by $\A$ and $\B$.
In this case $\PCSP(\A,\B)$ has a trivial reduction to $\CSP(\C)$ without changing the instance.
In general, the complexity of
the $\PCSP$ is unknown.
Still, to the best of our knowledge, all known tractable
(i.e., solvable in polynomial time)
PCSPs reduce to tractable CSPs in this way.
 
 As in~\cite{AB:FTPCSP} we call $\PCSP(\A,\B)$ \emph{finitely tractable} if there exists a finite $\C$ that is sandwiched by
 $\A$ and $\B$ such that $\CSP(\C)$ is tractable. 
 Barto~\cite{Ba:PMF} provided the first example of a PCSP
 over finite structures
 that is tractable but not finitely tractable.
 In~\cite{DEMMNS} the second author of this paper and his students gave the first example of a finitely tractable
 $\PCSP(\A,\B)$ for $\A,\B$ of size $2$ that does not reduce to a tractable $\CSP(\C)$ for any $\C$ of size $\leq 2$.

 We extend this result in Theorem~\ref{thm:example} as follows:
 For every prime $p$ and every integer $n>1$ we give $\A,\B$ of size $n$ such that $\PCSP(\A,\B)$ reduces to a
 tractable $\CSP$ over some $\C$ of size $p$ but not over any smaller structure.
 All these structures have a single relation with arity $n^p$.

 In our second family of examples in Theorem~\ref{thm:example2}
 we observe a similar behaviour even when restricting to structures with a single ternary relation:
 For every prime $p\geq 7$  we construct $\A,\B$ of size $p-1$ such that $\PCSP(\A,\B)$ reduces to a
 tractable $\CSP$ over some $\C$ of size $p$ but not over any smaller structure.

 In contrast we show that if finite undirected graphs $\A$ and $\B$ sandwich some finite directed graph $\C$ with $\CSP(\C)$
 tractable, then already  $\CSP(\A)$ or $\CSP(\B)$ is tractable (Corollary~\ref{cor:graphfintractable}).
 This answers Problem 1 of~\cite{DEMMNS}.

 More generally, if a finite smooth digraph $\A$ and a digraph $\B$ sandwich some finite digraph $\C$ with $\CSP(\C)$
 tractable, then already $\CSP(\A)$ or $\CSP(\B)$ is tractable (Corollary~\ref{cor:digraphfintractable}).
 It remains open whether every finitely tractable $\PCSP(\A,\B)$ for finite digraphs $\A,\B$ reduces to a tractable
 $\CSP(\C)$ for a digraph $\C$ of size $\leq\max(|A|,|B|)$.

 \section{Preliminaries}

We will only define a bare minimum of notions necessary to make sense of this article. For a more detailed introduction to promise constraint satisfaction, see~\cite{BBKO}. For more on digraph homomorphisms, see~\cite{HN:GH}.

 For $n\in\N$, write $\set{n} := \{0,\dots,n-1\}$ (note that $n\not\in \set{n}$).

 Unless indicated otherwise, arithmetical operations, like $+$ and $\cdot$, are considered over the integers
even if the input numbers come, say, from the set $\set{p}$. By $a\bmod p$ we mean the operation of taking remainder that returns a number in $\set{p}$. When writing formulas, $\bmod$ is evaluated after addition; in particular $\sum_i a_i\bmod p$ means $\left(\sum_i a_i\right)\bmod p$ where $\sum_i a_i$ is evaluated in $\mathbb Z$.

A \emph{relation} $R$ of arity $n$ on a set $A$ is a subset of $A^n$. A \emph{signature} is a (finite, in this article)
list of relation symbols, each of which is associated with an arity.
 A \emph{relational structure} of a given signature $\Gamma$,
 denoted by $\A$, consists of a universe $A$ together with a family of relations, one for each relation symbol from $\Gamma$.
 If $R$ is an $n$-ary relation symbol from $\Gamma$
and $\A$ is a structure over $\Gamma$ with universe $A$,
 then $R^\A$ is an $n$-ary relation on $A$. A relational structure is finite if its universe is finite.

 We call a relational structure $\A$ with universe $A$ \emph{affine} if there exists a binary operation $+$,
 a unary operation $-$ and a constant $0$ on $A$ such that $\mathbb{A} := (A,+,-,0)$ is an abelian group and every relation
 $R^\A$ of $\A$ is closed under $x-y+z$; equivalently, every, say, $n$-ary relation $R^\A$ is a coset of a subgroup
 of $\mathbb{A}^n$. Then $\CSP(\A)$ can be solved by linear algebra and is in the complexity class $\compP$.

 Let $n$ be a positive integer and $\A$ be a relational structure. By the \emph{$n$-th power of $\A$},
 denoted by $\A^n$, we will understand the relational structure with the universe $A^n$ (the set-theoretic $n$-th power)
 and with the same signature as $\A$. If $R^\A$ is an $m$-ary relation of $\A$, then $R^{\A^n}$ is the relation on
 $A^n$ that contains exactly those $m$-tuples $(u_1,u_2,\dots,u_m)\in (A^n)^m$ such that for all $i=1,2,\dots,n$
 the $i$-th projection of the $m$-tuple, $(\pi_i(u_1),\dots,\pi_i(u_m))$, lies in $R^\A$. Here $\pi_i$ is the mapping
 $A^n\to A$ that returns the $i$-th component of its input.

 Let $\A$ and $\B$ be two relational structures of the same signature $\Gamma$. A mapping $f\colon A\to B$ is a
 \emph{relational structure homomorphism} (or just homomorphism for short) if for each relation $R$ from $\Gamma$
 (denote its arity by $n$) and each $(a_1,\dots,a_n)\in R^\A$ we have $(f(a_1),\dots,f(a_n))\in R^\B$.
 An \emph{$n$-ary polymorphism} from $\A$ to $\B$ is a homomorphism $\A^n\to \B$.

 A finite relational structure $\A$ is a~\emph{core} if any homomorphism $\A\to\A$ is a bijective mapping.
 It is a well known fact that any finite relational structure has a unique (up to isomorphism) core.

Let $\Gamma$ be a (finite) signature and let $\A$ and $\B$ be two relational structures with signature $\Gamma$ such that there exists a homomorphism from $\A$ to $\B$.
\emph{The promise constraint satisfaction problem} with fixed target structures $\A$ and $\B$, denoted by $\PCSP(\A,\B)$, has as its instances finite relational structures $\X$ 
of signature $\Gamma$ (the relations of $\X$ are specified by listing all tuples in them). An instance is
\begin{itemize}
    \item a ``yes'' instance if there exists a  homomorphism $\X\to \A$, and
    \item a ``no'' instance if there does not exist any homomorphism $\X\to \B$.
\end{itemize}

It is easy to show that since $\A$ maps homomorphically to $\B$ no instance can be both a ``Yes'' and a ``No'' instance. However, there might exist instances that are neither ``Yes,'' nor ``No'' instances. On these inputs an algorithm solving $\PCSP(\A,\B)$ is allowed to do anything. A \emph{constraint satisfaction problem with target structure $\A$}, denoted by $\CSP(\A)$, is $\PCSP(\A,\A)$. 

As a reminder, $\compP$ is the class of all problems solvable in polynomial time and $\compNP$ is the class of all problems solvable in nondeterministic polynomial time.
Informally, we will call a (promise) constraint satisfaction problem tractable if there exists a polynomial time algorithm that correctly classifies each instance as a ``yes'' or ``no'' instance and we will assume that $\compP\neq\compNP$ since otherwise all presented complexity results are trivial.

If $\A$ is a relational structure and $\A'$ is the core of $\A$, then it is easy to verify that the sets of ``yes'' and
``no'' instances of $\CSP(\A)$ and $\CSP(\A')$ are the same. In particular, the computational complexity of $\CSP(\A)$ only
depends on the core of $\A$.

 Let $\A$ and $\A'$ be relational structures on the same universe but with possibly distinct signatures. We say that $\A'$ is
 \emph{pp-definable} from $\A$ if each relation on $\A'$ can be defined using a first order formula which only uses relations
 of $\A$, equality, existential quantifiers and conjunction. It is well-known and easy to see that if $\A'$ is pp-definable
 from $\A$, then $\CSP(\A')$ is reducible to $\CSP(\A)$.

 An important special type of relational structures are \emph{directed graphs (digraphs)};
 these are relational structures whose signature consists of one binary relation $E$, the edge relation.
 The elements of the universe of a digraph are usually called the vertices of the digraph. A symmetric (or
 undirected) \emph{graph} is a digraph whose edge relation is symmetric.
 A graph is \emph{bipartite} if its vertices can be partitioned into two sets $P$, $Q$ so that each edge connects
 a vertex from $P$ to some vertex from $Q$.
 A digraph $\G$ is \emph{smooth}
 if for each of its vertices $v$ there exist vertices $u$ and $w$ such that $(u,v)$ and $(v,w)$ both lie in $E^\G$.
 A \emph{directed cycle} of length $\ell$ is a digraph with vertices $[\ell]$ and edge set
 $\{(i,(i+1)\pmod \ell)\st i\in [\ell]\}$.

In the early 1990s, Hell and Nešetřil gave the following characterization of the complexity of CSP for symmetric graphs that we will later use:

\begin{thmC}[\protect{\cite[Theorem 1]{HN:CHC}}]\label{graph-dichotomy}
Let $\G$ be a finite symmetric graph. If $\G$ is bipartite, then $\CSP(\G)$ is in $\compP$. If $\G$ is not bipartite, then $\CSP(\G)$ is $\compNP$-complete. 
\end{thmC}

Barto, Kozik and Niven later generalized the above theorem to all smooth digraphs:
\begin{thmC}[\cite{BKN:CDHD}]\label{thm:digraph dichotomy}
Let $\G$ be a finite smooth digraph that is a core. If $\G$ is a disjoint union of directed cycles then $\CSP(\G)$ is in $\compP$. Otherwise, $\CSP(\G)$ is $\compNP$-complete.
\end{thmC}

 Our main tool for showing that some $\PCSP$ does not reduce to a tractable $\CSP(\C)$ for a structure $\C$ of
 specified size is the following result by Barto and Kozik. Recall that a function
$f\colon A^n\to B$ is \emph{cyclic} if for any $a_1,\dots,a_n\in A$ we have
\[
f(a_1,\dots,a_n)=f(a_2,a_3,\dots,a_n,a_1).
\]

\begin{thmC}[\protect{\cite[Theorem 4.1]{BK:ASCT}}]\label{thm:cyclic}
 Let $\C$ be finite relational structure, and let $p$ be a prime larger than the size of the universe of $\C$.
 If there exists no cyclic $p$-ary polymorphism $\C^p\to\C$, then $\CSP(\C)$ is $\compNP$-complete.
\end{thmC}

 If a structure $\C$ has a cyclic $p$-ary polymorphism $t$ and is sandwiched by $\A$ and $\B$ via homomorphisms
 $\A \xrightarrow{g} \C \xrightarrow{h} \B$, then the composition $f\colon A^p\to B$ defined by
 \[ f(x_1,\dots,x_p) := h(t(g(x_1),\dots,g(x_p))) \]
 is a cyclic $p$-ary polymorphism from $\A$ to $\B$.
 We will use the contrapositive of this statement together with Theorem~\ref{thm:cyclic} in the following form:
 if for some prime $p$ there is no cyclic $p$-ary polymorphism from $\A$ to $\B$, then $\PCSP(\A,\B)$ does not reduce
 to a tractable $\CSP(\C)$ for any $\C$ of size less than $p$.

\section{Big affine sandwiches}

 For every $n > 1$ and every prime $p$ we give structures $\A,\B$ of size $n$ with a single $n^p$-ary
 relation such that $\PCSP(\A,\B)$ reduces to a tractable $\CSP(\C)$ for some sandwiched $\C$ of size $p$ but
 not to a tractable $\CSP$ over any smaller sandwiched structure.
 For $n=2$ and $p=3$ such an example was given in~\cite{DEMMNS}.

\begin{thm}  \label{thm:example}
 For $n,p > 1$, 
 let $R$ be a relation symbol of arity $n^p$, and let 
 $\A = (\set{n}, R^\A)$, $\B = (\set{n}, R^\B),\C = (\set{p}, R^\C)$ be relational structures with
\begin{align*}
 R^\A & = \{ f\colon \set{n}^p\to \set{n} \st f \text{ is a projection} \}, \\
 R^\B & = \{ f\colon \set{n}^p\to \set{n} \st f \text{ is not cyclic} \}, \\
  R^\C & = \left\{ f\colon \set{n}^p\to \set{p},\ (x_1,\dots,x_p) \mapsto \sum_{i=1}^p a_ix_i \bmod p \st
     a_1,\dots,a_p\in\set{p}, \sum_{i=1}^p a_i \bmod p = 1 \right\}.
\end{align*}
 Then
\begin{enumerate}
\item \label{it:sandwich}
  $\C$ is affine (hence $\CSP(\C)$ is in $\compP$)
 and is sandwiched by $\A$ and $\B$ via the homomorphisms $\A \xrightarrow{g} \C \xrightarrow{h} \B$
 where $g\colon\set{n} \to \set{p},\ x \mapsto x \bmod p$, and $h\colon\set{p} \to \set{n},\ x \mapsto x \bmod n$.
\item \label{it:nosmaller} 
 If $p$ is prime and $\D$ is a structure with $|D|<p$ that is sandwiched by $\A$ and $\B$, then $\CSP(\D)$ is $\compNP$-complete.

\item \label{it:AB}
 $\PCSP(\A,\B)$ is in $\compP$. 
 If $p>2$, then $\CSP(\A)$ is $\compNP$-complete; else  it  is  in $\compP$.
 If $(n,p)\neq(2,2)$, then $\CSP(\B)$ is $\compNP$-complete; else  it is  in $\compP$.
\end{enumerate}
\end{thm}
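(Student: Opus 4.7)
I will handle the three parts in turn.

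For part (\ref{it:sandwich}), first observe that $R^\C$ is the image of the affine subset $\{(a_1,\dots,a_p)\in\mathbb{Z}_p^p : \sum_i a_i\equiv 1\pmod p\}$ under the $\mathbb{Z}_p$-linear map $(a_1,\dots,a_p)\mapsto (\sum_i a_ix_i\bmod p)_{x\in\set{n}^p}$; hence $R^\C$ is a coset of a subgroup of $\mathbb{Z}_p^{n^p}$, proving that $\C$ is affine and $\CSP(\C)\in\compP$. The map $g$ is a homomorphism because each projection $\pi_i\in R^\A$ corresponds under $g$ to the coefficient vector $(0,\dots,1,\dots,0)$ (with the $1$ in position $i$), whose entries sum to $1$. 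The delicate step is showing $h$ is a homomorphism, i.e., that $h\circ f$ is non-cyclic for every $f\in R^\C$. My plan is by contradiction: writing $f(x)=\sum a_ix_i\bmod p$ and assuming $h\circ f$ cyclic, testing the unit-vector inputs $e_i\in\set{n}^p$ forces $a_i\equiv a_{i-1}\pmod n$ for all $i$, so all $a_i$ share a common residue $c$ mod $n$. If $n\geq p$, then $a_i<p\leq n$ forces $a_i=c$, yielding $\sum a_i=pc\equiv 0\not\equiv 1\pmod p$, a contradiction. If $n<p$, I will test cyclicity on inputs with several nonzero entries (and, when $n\geq 3$, on $ke_i$ for $k=2,\dots,n-1$): since $\gcd(n,p)=1$, the wrap-around under mod $p$ contributes a nonzero residue mod $n$ whenever a partial sum crosses a multiple of $p$, forcing the $a_i$ to all coincide and again contradicting $\sum a_i\equiv 1\pmod p$.

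For part (\ref{it:nosmaller}), the core observation will be that no cyclic $p$-ary polymorphism exists from $\A$ to $\B$: such a polymorphism $f\colon\set{n}^p\to\set{n}$, applied to the element of $R^{\A^p}$ corresponding to the index choice $(j_1,\dots,j_p)=(1,2,\dots,p)$, would yield $f$ itself as a tuple in $R^\B$, but $R^\B$ contains only non-cyclic functions, so $f$ cannot be cyclic. Now given a sandwich $\A\xrightarrow{g'}\D\xrightarrow{h'}\B$ with $|D|<p$ and $\CSP(\D)\in\compP$, Theorem~\ref{thm:cyclic} yields a cyclic $p$-ary polymorphism $t$ of $\D$, and $(x_1,\dots,x_p)\mapsto h'(t(g'(x_1),\dots,g'(x_p)))$ is then a cyclic $p$-ary polymorphism from $\A$ to $\B$, the desired contradiction.

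For part (\ref{it:AB}), the inclusion $\PCSP(\A,\B)\in\compP$ is immediate from the reduction to $\CSP(\C)$. For $\CSP(\A)$ and $\CSP(\B)$, I will analyze polymorphism clones. Every $k$-ary polymorphism of $\A$ with $k\leq p$ is forced to be a projection by specializing to $(j_1,\dots,j_k)=(1,\dots,k)$; for $p>2$ this rules out any Taylor operation, giving $\compNP$-completeness (extending the argument to $k>p$ or combining the $k=p$ case with Theorem~\ref{thm:cyclic}), whereas for $p=2$ the ternary majority operation on $\set{n}$ is readily checked to be a polymorphism since each three-element selection from $\{\pi_1,\pi_2\}$ has two copies agreeing, so $\CSP(\A)\in\compP$. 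A parallel clone analysis for $\CSP(\B)$ will show that the unique tractable case is $(n,p)=(2,2)$, where $R^\B$ reduces to a disequality constraint on two of its coordinates with the remaining two coordinates free. The main obstacle will be the $n<p$ subcase of part (\ref{it:sandwich}), where verifying non-cyclicity of $h\circ f$ requires careful tracking of how mod-$p$ wrap-around interacts with reduction mod $n$.
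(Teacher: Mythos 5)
Your parts (\ref{it:nosmaller}) and the opening of (\ref{it:AB}) match the paper's argument, but part (\ref{it:sandwich}) has a genuine gap exactly at what you yourself call ``the delicate step.'' Two concrete problems in the $n<p$ subcase: first, you invoke $\gcd(n,p)=1$, but the theorem requires only $n,p>1$, and in part (\ref{it:sandwich}) $p$ need not be prime, so $\gcd(n,p)$ can be anything (e.g.\ $n=4$, $p=6$). Second, the intermediate claim ``cyclicity forces all $a_i$ to coincide'' is simply false: with $n=2$, $p=5$ and $(a_1,\dots,a_5)=(0,0,0,2,2)$, all $a_i$ are even, $2x_4+2x_5\in\{0,2,4\}$ never wraps mod $5$, and the resulting function $(2x_4+2x_5\bmod 5)\bmod 2\equiv 0$ is constant, hence cyclic, while the $a_i$ do not coincide. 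What saves the theorem is only that in such examples $\sum a_i\not\equiv 1\pmod p$, and establishing precisely this implication (the paper's Lemma~\ref{lem:hf}\eqref{it:ai}) is the crux. The paper gets there via an induction establishing a floor-division identity, then proves the function is fully symmetric, sorts the coefficients, and runs a two-pointer inequality argument on the partial sums $\sum_{i\leq q}(a_i-b)$ and $\sum_{i>q}(a_i-b)$. Your proposal does not supply, or even sketch, a replacement for that machinery, so the case $n<p$ remains unproved.

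On part (\ref{it:AB}), your observation that for $k\leq p$ every $k$-ary polymorphism of $\A$ is a projection (by applying it to $\pi_1,\dots,\pi_k\in R^\A$) is correct and is a nice alternative viewpoint, and the $p=2$ majority argument is right. But ``extending the argument to $k>p$ or combining the $k=p$ case with Theorem~\ref{thm:cyclic}'' is not a proof --- if $n\geq p$ there may be no prime $q$ with $n<q\leq p$ to feed to Theorem~\ref{thm:cyclic}, so you would still need something else. Similarly ``a parallel clone analysis for $\CSP(\B)$'' is left entirely unspecified. The paper instead takes a concrete pp-definability route: projecting $R^\A$ to the unit vectors $E_p$ gives $1$-in-$p$-SAT (Schaefer, $\compNP$-complete for $p>2$); projecting the constant-off-$E_p$ part of $R^\B$ gives $p$-ary NAE on $[n]$, which yields NAE-$p$-SAT for $n=2$ and graph $n$-colouring for $n>2$. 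Those reductions are short and would close the holes your clone sketch leaves open.
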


 The main technical difficulty in proving Theorem~\ref{thm:example} is to show that $h\colon\C\to\B$ is a homomorphism.
 For this we first establish several properties of cyclic functions of a form that is slightly more general than of those
 in $h(R^\C)$ in the next lemma.

 For a rational number $q$ let $\lfloor q\rfloor$ denote the greatest integer less than or equal to $q$. 
 
\begin{lem} \label{lem:hf}
 For $n,p>1$ and $a_1,\dots,a_p\in\set{p}$, let
\[ f\colon \set{n}^p\to \set{n},\ (x_1,\dots,x_p) \mapsto \left(\sum_{i=1}^p a_ix_i \bmod p\right) \bmod n, \]
 be cyclic. Then
\begin{enumerate}
\item \label{claim:parity}
 $a_1,\dots,a_p$ are all congruent modulo $n$.
\item \label{claim:division}
 If $n$ does not divide $p$, then for any $x_1,\dots,x_p\in\set{n}$ we have
 \[  \left\lfloor \frac{\sum_{i=1}^p a_ix_i}{p} \right\rfloor = \left\lfloor \frac{\sum_{i=1}^p a_i\sum_{j=1}^p x_j}{p^2} \right\rfloor. \]
\item \label{claim:symmetric}
 $f$ is symmetric, i.e., invariant under all permutations of its inputs. 
\item \label{it:ai}
$\sum_{i=1}^p a_i \bmod p \neq 1$.
\end{enumerate}  
\end{lem}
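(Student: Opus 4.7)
I would prove the four claims in sequence, each building on the previous. Claim~(1) is the easy warm-up: evaluating $f$ on the standard basis vector $e_k \in \set{n}^p$ (a $1$ in position $k$ and $0$s elsewhere) gives $f(e_k) = (a_k \bmod p) \bmod n = a_k \bmod n$ since $a_k < p$. Cyclicity sends $e_k$ to $e_{k-1 \bmod p}$, so all $a_k \bmod n$ coincide; I write this common value $a$ and set $a_i = a + n b_i$ with $b_i$ a non-negative integer less than $p/n$.

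For Claim~(2), let $S_k := \sum_i a_i x_{\sigma^k(i)}$ where $\sigma$ is the cyclic shift, so $\sum_{k=0}^{p-1} S_k = (\sum_i a_i)(\sum_j x_j)$. Cyclicity gives $(S_k \bmod p) \bmod n = f(x)$ for every $k$; writing $S_k = p t_k + r_k$ with $0 \le r_k < p$, each $r_k \equiv f(x) \pmod n$. My aim is to prove $t_k = t_0$ for every $k$; the identity then follows because $\sum_k r_k \le p(p-1) < p^2$ forces $\lfloor \sum_k S_k / p^2 \rfloor = t_0$. Using Claim~(1), $S_k = a \sum_j x_j + n \sum_i b_i x_{\sigma^k(i)}$, so $S_k$ is constant modulo $n$ in $k$, and combined with the constancy of $r_k$ modulo $n$ this gives $p(t_{k+1} - t_k) \equiv 0 \pmod n$. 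The hypothesis $n \nmid p$ enters via a bound on $|S_{k+1} - S_k|$ from $b_i < p/n$ and $x_j < n$: this bound forces $|t_{k+1} - t_k| < n/\gcd(n, p)$, and combined with the divisibility the two give $t_{k+1} = t_k$. Claim~(3) is then routine: substituting Claim~(2) into $f(x) = (S_0 - p t_0) \bmod n$ and using $\sum_i a_i x_i \equiv a \sum_j x_j \pmod n$ yields a symmetric formula for $f(x)$; when $n \mid p$ we get $f(x) = (a \sum_j x_j) \bmod n$ directly, also symmetric.

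Finally, Claim~(4), which I expect to be the main obstacle, splits cleanly into two cases. If $n \mid p$, Claim~(1) gives $\sum_i a_i \equiv pa \equiv 0 \pmod n$; since $n \mid p$ implies reduction modulo $p$ preserves divisibility by $n$, $\sum_i a_i \bmod p$ is a multiple of $n > 1$ and thus cannot equal $1$. If $n \nmid p$, assume $\sum_i a_i \equiv 1 \pmod p$ for contradiction. Symmetry from Claim~(3) applied to the inputs $(c, 0, \dots, 0)$ for $c \in \set{n}$ gives $(c a_i) \bmod p \bmod n$ independent of $i$, and applied to $(1,\dots,1,0,\dots,0)$ with $p-1$ ones gives $(1 - a_i) \bmod p \bmod n$ independent of $i$. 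Combined with the arithmetic-progression structure from Claim~(1), a case analysis on $a = a_1 \bmod n$ and on whether small or large representatives appear among the $a_i$'s forces, under $n \nmid p$, all $a_i$'s to coincide. Then $\sum_i a_i = p a_1 \equiv 0 \pmod p$, contradicting $\sum_i a_i \equiv 1 \pmod p$. The main technical obstacle is making the bound-plus-divisibility argument for Claim~(2) work across all regimes of $\gcd(n,p)$, and completing the case analysis in Claim~(4) for $n \nmid p$ without gaps.
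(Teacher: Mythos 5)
Your proofs of Claims~(1) and (3) agree in substance with the paper's. The problems are in Claims~(2) and (4).

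In Claim~(2) you assert that the bound on $|S_{k+1}-S_k|$ following from $b_i < p/n$ and $x_j < n$ ``forces $|t_{k+1}-t_k| < n/\gcd(n,p)$,'' and this step is false. Take $n=2$, $p=11$, $a=(0,10,0,10,0,10,0,10,0,10,0)$ (so $b_i\in\{0,5\}$, well within $p/n=5.5$) and $x=(0,1,0,1,0,1,0,1,0,1,0)$: then $S_0=50$ gives $t_0=4$, while the cyclic shift $x'=(1,0,1,0,1,0,1,0,1,0,0)$ gives $S_1=0$, so $t_1=0$ and $|t_1-t_0|=4\geq 2=n/\gcd(n,p)$. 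Of course this $f$ is not cyclic, but your bound is meant to come from $b_i<p/n$ and $x_j<n$ alone, so this is a direct counterexample to the step. The paper instead proves Claim~(2) by induction on the weight $\sum_i x_i$: decrementing one nonzero coordinate and applying the inductive hypothesis confines every shifted sum $\sum_i a_i x_{\sigma^s(i)}$ to the window $[kp,(k+2)p-1]$, so every floor is $k$ or $k+1$; only then does the congruence argument (cyclicity, Claim~(1), and $n$ not dividing $p$) force all floors equal. This a~priori confinement of the floors to a window of length two is the ingredient your direct approach is missing, and without it the divisibility step does not close.

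In Claim~(4), in the case where $n$ does not divide $p$, you use only the symmetry constraints coming from inputs of the form $(c,0,\dots,0)$ and from inputs with $p-1$ ones, and then assert that ``a case analysis \dots\ forces all $a_i$'s to coincide.'' You give no case analysis, and it is unclear those two families of constraints suffice. The paper's argument is genuinely different: after sorting the $a_i$, it applies the full strength of Claim~(2) to the inputs with the first $q$, last $q$, and then first/last $q+1$ coordinates equal to $1$ (where $p\in\{2q,2q+1\}$), pins $r+qb=kp$ and $s+qb=(k+1)p-1$, then forces $a_{p-q}=0$, and only from there derives a contradiction; it never concludes that all $a_i$ are equal. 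Indeed cyclicity with $n$ not dividing $p$ does not in general force the $a_i$ to coincide --- $a=(0,2,0,2,0)$, $n=2$, $p=5$ gives $f\equiv 0$, cyclic, with unequal $a_i$; one merely has $\sum_i a_i\equiv 4\not\equiv 1\pmod 5$. So if your case analysis can be carried out at all, it must use $\sum_i a_i\equiv 1\pmod p$ in a nontrivial way, and you have not shown how. You should fall back on the paper's strategy here, which leans heavily on Claim~(2) rather than on symmetry alone.
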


\begin{proof}
 For~\eqref{claim:parity} note that $f(1,0\dots,0) = \dots = f(0,\dots,0,1)$ as $f$ is cyclic.
 Now observe that when the $1$ is at the $i$-th position,
 then $f(0,\dots,0,1,0,\dots,0)=a_i\bmod n$. Putting these two facts together gives us that all $a_i$'s are the same modulo $n$.

 For~\eqref{claim:division} assume that $n$ does not divide $p$.
 We show that the integer part of the quotient obtained by dividing $\sum_{i=1}^p a_ix_i$ by $p$ depends
 only on the \emph{weight} $w := \sum_{i=1}^p x_i$ of $(x_1,\dots,x_p)\in\set{n}^p$ by induction on $w$.

 The base case for $\sum_{i=1}^p x_i = 0$ is immediate.
 Next take $x_1,\dots,x_p\in\set{n}$ such that $\sum_{i=1}^p x_i = w > 0$. Without loss of generality assume
 $x_1\neq 0$.
 Let $\sigma$ be the cyclic permutation $\sigma := (1,\dots,p)$, and
 let $k := \left\lfloor \frac{\sum_{i=1}^p a_i (w-1)}{p^2} \right\rfloor$.  Let $y_1=x_1-1$ and $y_i=x_i$ for $i=2,3,\dots,p$. Since $\sum_{i=1}^p y_i=w-1$,
 the induction hypothesis yields
 \[
  \left\lfloor \frac{\sum_{i=1}^p a_i y_i}{p} \right\rfloor = \left\lfloor \frac{\sum_{i=1}^p a_i\sum_{j=1}^p y_j}{p^2} \right\rfloor=\left\lfloor \frac{\sum_{i=1}^p a_i(w-1)}{p^2} \right\rfloor=k.
 \]
 The total weight will not change if we permute the $y_i$'s, so for any $s\in [p]$ we also get
 $\left\lfloor \frac{\sum_{i=1}^p a_i y_{\sigma^s(i)}}{p} \right\rfloor = k$. In particular, $\sum_{i=1}^p a_i y_{\sigma^s(i)}-kp$ lies in $[p]$ for all  $s$.
 
Going back from $y_i$'s to $x_i$'s, we get
\[
\sum_{i=1}^p a_ix_{\sigma^s(i)} - a_{\sigma^{-s}(1)}- kp=\sum_{i=1}^p a_i y_{\sigma^s(i)}-kp \in\set{p}.
\]
 Given that $a_{\sigma^{-s}(1)}\in [p]$, we get
\begin{equation} \label{eq:in[2p-1]}
 \sum_{i=1}^p a_ix_{\sigma^s(i)} - kp \in\set{2p-1}
\end{equation}
for any $s\in\set{p}$.

Since $f$ is cyclic, $\sum_{i=1}^p a_ix_{\sigma^s(i)} \bmod p$ gives the same remainder modulo $n$ for all $s\in\set{p}$.
Furthermore, from~\eqref{eq:in[2p-1]} we see that for each $s$ we have
\[
\sum_{i=1}^p a_ix_{\sigma^s(i)}\bmod p=
\begin{cases}
\sum_{i=1}^p a_ix_{\sigma^s(i)}-kp, \text{ or}\\ \sum_{i=1}^p a_ix_{\sigma^s(i)}-(k+1)p.\\
\end{cases}
\]
We claim that exactly one of the two cases above holds for all $s\in\set{p}$.
Suppose for a contradiction that there are $q,r\in\set{p}$ such that
\begin{align*}
    \sum_{i=1}^p a_ix_{\sigma^q(i)}\bmod{p}&=\sum_{i=1}^p a_ix_{\sigma^q(i)}-kp\\
\sum_{i=1}^p a_ix_{\sigma^r(i)}\bmod{p}&=\sum_{i=1}^p a_ix_{\sigma^r(i)}-(k+1)p.
\end{align*}
 Since $f$ is cyclic, we get
\begin{equation}\label{eq:qrmodn}
\left(\sum_{i=1}^p a_ix_{\sigma^q(i)}-kp\right)\bmod n= 
\left(\sum_{i=1}^p a_ix_{\sigma^r(i)}-(k+1)p\right)\bmod n.
\end{equation}
 However, from~\eqref{claim:parity} we get that $\sum_{i=1}^p a_ix_{\sigma^s(i)}$ has the same remainder modulo $n$ for all
 $s\in\set{p}$.
 So~\eqref{eq:qrmodn} simplifies to $-kp\equiv -(k+1)p\pmod n$.
Thus $n$ divides $p$, which contradicts our assumption.

 We have obtained that one of the following two cases happens:
\begin{itemize}
 \item Either $\sum_{i=1}^p a_ix_{\sigma^{ s}(i)} - kp$  is in  $\set{p}$ for all $s\in\set{p}$, or
 \item $\sum_{i=1}^p a_ix_{\sigma^{ s}(i)} - kp$ is in $\{p,p+1,\dots, 2p-1\}$ for all $s\in\set{p}$.
 \end{itemize}
 In other words we have $k'\in\{k,k+1\}$ and $r_0\dots,r_{p-1}\in\set{p}$ such that
\begin{equation} \label{eq:k'}
 \sum_{i=1}^p a_ix_{\sigma^s(i)} = k'p+r_s 
\end{equation}
 for all $s\in\set{p}$. Dividing both sides of~\eqref{eq:k'} by $p$ we get
 $\left\lfloor\frac{\sum_{i=1}^p a_ix_{\sigma^s(i)}}{p}\right\rfloor=k'$. All that remains is to calculate $k'$.
  
 Summing~\eqref{eq:k'} over all $s\in \set{p}$, we get
\begin{equation}\label{eq:p2}
 \sum_{s=0}^{p-1}\sum_{i=1}^p a_ix_{\sigma^s(i)} = k'p^2+\sum_{s=0}^{p-1} r_s \leq k'p^2+p(p-1).
 \end{equation}
Counting in two ways, we obtain that the left hand side of~\eqref{eq:p2} is
\[ \sum_{s=0}^{p-1}\sum_{i=1}^p a_ix_{\sigma^s(i)} = \sum_{i=1}^p a_i \cdot \sum_{j=1}^p x_j. \]
Dividing by $p^2$ yields $k' =  \left\lfloor \frac{\sum_{i=1}^p a_i\sum_{j=1}^p x_j}{p^2} \right\rfloor$.
Thus in particular for $s=0$ we get
\[
\left\lfloor\frac{\sum_{i=1}^p a_ix_{\sigma^0(i)}}{p}\right\rfloor=\left\lfloor\frac{\sum_{i=1}^p a_ix_i}{p}\right\rfloor=\left\lfloor \frac{\sum_{i=1}^p a_i\sum_{j=1}^p x_j}{p^2} \right\rfloor
\]
and the induction step follows. Thus~\eqref{claim:division} is proved.

 Next we show~\eqref{claim:symmetric}. 
 If $n$ divides $p$, then $f$ simplifies to $f(x_1,\dots,x_p) = \sum_{i=1}^p a_ix_i \bmod n$, which is clearly symmetric
 by~\eqref{claim:parity}. So assume that $n$ does not divide $p$ in the following.
 Let $\pi$ be a permutation on $\{1,\dots,p\}$, and let $x_1,\dots,x_p\in\set{n}$.
 By~\eqref{claim:division} we have $k\in\N$ and $r,s\in\set{p}$ such that
 $\sum_{i=1}^p a_ix_{i}=kp+r$ and $\sum_{i=1}^p a_ix_{\pi(i)}=kp+s$. Since these sums give the same remainder modulo
 $n$ by~\eqref{claim:parity}, so do $r$ and $s$. 
 Hence $f(x_1,\dots,x_p) = f(x_{\pi(1)},\dots,x_{\pi(p)})$ for all permutations $\pi$ and~\eqref{claim:symmetric} is proved.

 Finally we prove~\eqref{it:ai}.
 Since $f$ is symmetric by~\eqref{claim:symmetric}, we may reorder its variables so that
\[ a_1\leq a_2 \leq \dots \leq a_p. \]
 Let $q\in\N$ be such that $p=2q$ or $p=2q+1$. For $b := a_{q+1}$, we have 
\[ r := \sum_{i=1}^{q} (a_i-b) \leq 0\quad \text{and}\quad s := \sum_{i=q+1}^p (a_i-b) \geq 0. \]
 Further
\begin{equation} \label{claim:rs}
  r+s = \sum_{i=1}^p a_i - pb \quad \text{and}\quad r \equiv s \equiv 0 \bmod n,
\end{equation}
 with the latter following from~\eqref{claim:parity}.

 If $n$ divides $p$, then~\eqref{claim:rs} implies that $n$ divides $\sum_{i=1}^p a_i$. In particular
 $\gcd(\sum_{i=1}^p a_i,p) \geq n > 1$ and~\eqref{it:ai} is proved in this case. So we assume that $n$ does not divide $p$
 for the rest of the proof.
 We apply the equality in~\eqref{claim:division} in the situation when either the first $q$ or the last $q$
 variables $x_i$ are $1$ and the rest of the $x_i$'s are $0$. In both cases $\sum_{j=1}^px_j=q$, so we get
 \[
  \left\lfloor \frac{\sum_{i=1}^{q} a_i}{p} \right\rfloor  =\left\lfloor \frac{\sum_{i=1}^p a_iq }{p^2} \right\rfloor =
  \left\lfloor \frac{\sum_{i=p-q+1}^{p} a_i}{p} \right\rfloor.  
 \]
 Replacing $\sum_{i=1}^{q} a_i$ by $r+qb$  and $\sum_{i=p-q+1}^{p} a_i$ by $s+qb$ yields
\[
\left\lfloor \frac{r+qb}{p} \right\rfloor  =
  \left\lfloor \frac{s+qb}{p} \right\rfloor.  
\]
Denote the quantity on the line above by $k$. Multiplying by $p$, we obtain the inequalities (recall that $r\leq s$)
\begin{equation} \label{eq:quotient2}
  kp \leq r+qb \leq s+qb \leq (k+1)p-1.
\end{equation} 
 Hence $0\leq s-r \leq p-1$. 
 Since $r\leq 0$, this yields $0\leq s\leq p-1$. Similarly $-p+1 \leq r\leq 0$. In particular $-p+1 \leq r+s \leq p-1$.

 Seeking a contradiction we suppose that $\sum_{i=1}^p a_i \equiv 1 \bmod p$. Then also $r+s\equiv 1 \bmod p$ by~\eqref{claim:rs},
 which leaves $r+s = -p+1$ or $r+s=1$.
 The latter case is impossible since $r$ and $s$ are both multiples of $n>1$ by~\eqref{claim:rs}.
 We are left with the case $r+s=-p+1$. Due to the inequalities $s\geq 0$ and $r\geq -p+1$, we must have $r=-p+1$ and $s=0$.
 We will now bring this case to a contradiction.
 
 Since $s-r=p-1$, the outer inequalities in~\eqref{eq:quotient2} must be equalities, that is,
 \[ r+qb=kp \quad \text{and} \quad s+qb=(k+1)p-1. \]
 We will again apply~\eqref{claim:division}; this time we will let either the first
 $q+1$ or the last $q+1$ variables $x_i$ be $1$ and the rest $0$. From~\eqref{claim:division} we obtain
 \[
 \left\lfloor \frac{r+qb+b}{p} \right\rfloor=
 \left\lfloor \frac{s+qb+a_{p-q}}{p} \right\rfloor.
 \]
 Plugging in $r+qb=kp$ and $s+qb=(k+1)p-1$, we get
 \begin{align*}
 \left\lfloor \frac{kp+b}{p} \right\rfloor&=
 \left\lfloor \frac{(k+1)p-1+a_{p-q}}{p} \right\rfloor, \\
 \left\lfloor k+\frac{b}{p} \right\rfloor&=
 \left\lfloor(k+1)+ \frac{a_{p-q}-1}{p} \right\rfloor.
 \end{align*}
 Since $b\leq p-1$ the left hand side evaluates to $k$. In order for the right hand side to also be $k$, we must have $a_{p-q}=0$.
 It follows that $a_1 = \dots = a_{q} \leq a_{p-q} = 0$ and consequently $r=-qb$. We distinguish two cases depending on the
 parity of $p$:
\begin{itemize}
\item   
 If $p = 2q+1$ is odd, then $0=a_{p-q}=a_{q+1}=b$ yields $r=0$, which contradicts $r=-p+1$. 
\item
  Else if $p = 2q$ is even, then $r=-qb$ and $r=-2q+1$ implies $q=b=1$ and $p=2$.
  In this situation we would have $a_1=0$ and $a_2=b=1$. Item~(\ref{claim:parity}) then gives us $0\equiv 1 \pmod n$,
  a contradiction with $n>1$.
\end{itemize} 
 Either case led to a contradiction. Thus~\eqref{it:ai} is proved.
\end{proof}  

 After this preparation we are now ready to prove our main result.

\begin{proof}[Proof of Theorem~\ref{thm:example}]
 For~\eqref{it:sandwich} note first that by definition $R^\C$ is the closure of $g(R^\A)$ under $x-y+z \bmod p$.
 Hence $\C$ is affine and $g\colon\A\to\C$ is a homomorphism. Next for any $f\in R^\C$, the function $h(f)$ is of the
 form
\[ h(f)\colon \set{n}^p\to \set{n},\ (x_1,\dots,x_p) \mapsto \left(\sum_{i=1}^p a_ix_i \bmod p\right) \bmod n, \]
for some $a_1,\dots,a_p\in\set{p}$ with $\sum_{i=1}^p a_i \bmod p = 1$.
Since $h(f)$ cannot by cyclic by part~\eqref{it:ai} of Lemma~\ref{lem:hf},
we have $h(f)\in R^\B$ and $h\colon\C\to\B$ is a homomorphism. 
 
 For~\eqref{it:nosmaller} assume that $p$ is prime.
 Let $\D$ be a relational structure whose universe is smaller than $p$ with homomorphisms
 $\A \xrightarrow{r} \D \xrightarrow{s} \B$. 
 Seeking a contradiction, suppose that $\D$ has some cyclic $p$-ary polymorphism $t\colon\D^p\to \D$.
 Consequently
\[ f\colon \A^p\to\B,\ (x_1,\dots,x_p) \mapsto s(t(r(x_1),\dots,r(x_p))), \]
 is a cyclic polymorphism from $\A$ to $\B$. However, applying $f$ to the graphs of projections
 $\pi_1$, $\pi_2,\dots,\pi_p \in R^\A$ gives us that $f\in R^\B$
 in contradiction to the definition of $R^\B$.

 Hence $\D$ has no cyclic $p$-ary polymorphisms. Thus $\CSP(\D)$ is $\compNP$-complete by Theorem~\ref{thm:cyclic}.

 For~\eqref{it:AB} note that $\PCSP(\A,\B)$ is in $\compP$ by~\eqref{it:sandwich}.
 If $n<p$ and $p$ is prime, then~\eqref{it:nosmaller} yields that $\CSP(\A)$ and $\CSP(\B)$ are $\compNP$-complete already.
 We still give brief self-contained arguments for the complexity of $\CSP(\A)$ and $\CSP(\B)$ in general.
 Let 
 \[
 E_p := \{(1,0,0,\dots,0),(0,1,0,\dots,0),\dots,(0,\dots,0,1)\}\subseteq [n]^p.
\]
First note that the \emph{projection} of $R^\A$ to $E_p$,
\[ \{ f|_{E_p} \st f\in R^{\A} \}, \]
is the set of restrictions of the $p$-ary projection maps to $E_p$. When $f\colon [n]^p\to [n]$ is a projection map,
then $f|_{E_p}$ can be identified with a member of $E_p$ in a natural way.
Hence $(\set{n}, E_p)$ is pp-definable from $\A$ and $\CSP(\set{2}, E_p)$ reduces to $\CSP(\A)$.
The former is known as $1$-in-$p$-SAT and  is $\compNP$-complete if $p>2$ by Schaefer's dichotomy for Boolean CSP~\cite{Sc:CSP}.
Thus $\CSP(\A)$ is $\compNP$-complete as well if $p>2$. 
 If $p=2$, then $R^\A$ contains only $2$ elements.
Hence $\A$ has a \emph{majority polymorphism} $m\colon \A^3\to\A$ satisfying $m(x,x,y) = m(x,y,x) = m(y,x,x) = x$ for all $x,y\in A$.
Thus $\CSP(\A)$ is tractable by~\cite{FV:CSMM}.

 Next consider the projection of $\{f\in R^\B \st f\text{ is constant on } [n]^p\setminus E_p \}$ to $E_p$.
 This is $N_p := [n]^p\setminus\{(a,\dots,a) \st a\in [n]\}$, the $p$-ary not-all-equal relation on $[n]$.
 Hence $\CSP([n], N_p)$ reduces to $\CSP(\B)$.
 If $n=2$, the former is Not-All-Equal $p$-SAT, which is known to be $\compNP$-complete for $p>2$ by Schaefer's dichotomy for Boolean CSP.
 For $n>2$, note that the not-equal relation $N_2 = \{ (a,b)\in [n]^2 \st (a,b\dots,b) \in N_p \}$ is primitively positively
 definable from $N_p$ and hence from $R^\B$. So $\CSP([n], N_2)$ reduces to $\CSP(\B)$. The former is graph $n$-coloring,
 which is $\compNP$-complete for $n>2$. It follows that $\CSP(\B)$ is $\compNP$-complete whenever $(n,p)\neq (2,2)$.

 If $p=n=2$, then $R^\B = \{ f\colon [2]^2\to [2] \st f(1,0) = f(0,1)+1 \bmod 2 \}$. Hence $\B$ is affine and $\CSP(\B)$
 in $\compP$. 
\end{proof}

 In our second class of examples,  let $p\geq 7$ be prime. We construct structures $\A,\B$ of size $p-1$  with a single ternary
 relation such that $\PCSP(\A,\B)$ reduces to a tractable $\CSP(\C)$ for some sandwiched $\C$ of size $p$ but
 not to a tractable $\CSP$ over any smaller sandwiched structure.

\begin{thm}  \label{thm:example2}
  For a prime $p \geq 7$,
 let $R$ be a relation symbol of arity $3$, and let 
 $\A = (\set{p-1}, R^\A)$, $\B = (\set{p-1}, R^\B),\C = (\set{p}, R^\C)$ be relational structures with
\begin{align*}
 R^\C & = \{ (x,y,z)\in\set{p}^3 \st x-2y+z \equiv 1 \bmod p \}, \\
 R^\A & = R^\C \cap \set{p-1}^3, \\
 R^\B & = h(R^\C) \text{ for } h\colon \set{p}\to\set{p-1},\ x\mapsto\begin{cases} x & \text{if } x\in\set{p-1}, \\
    1 & \text{if } x=p-1. \end{cases}
\end{align*}
 Then
\begin{enumerate}
\item \label{it:sandwich2}
 $\C$ is affine and sandwiched by $\A$ and $\B$ via the homomorphisms $\A \xrightarrow{\textrm{id}} \C \xrightarrow{h} \B$,
 and
\item \label{it:nosmaller2} 
 if $\D$ is a structure with $|D|<p$ that is sandwiched by $\A$ and $\B$, then $\CSP(\D)$ is $\compNP$-complete.
 In particular $\CSP(\A)$ and $\CSP(\B)$ are both $\compNP$-complete  but $\PCSP(\A,\B)$ is in $\compP$.
\end{enumerate}
\end{thm}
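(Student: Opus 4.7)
For Part~\eqref{it:sandwich2}, $R^\C$ is the solution set of a single linear congruence modulo the prime $p$, hence a coset of a subgroup of $\mathbb{Z}_p^3$, so $\C$ is affine and $\CSP(\C)\in\compP$. The inclusion $\mathrm{id}\colon\set{p-1}\hookrightarrow\set{p}$ maps $R^\A$ into $R^\C$ by definition, giving a homomorphism $\A\to\C$, and $h\colon\C\to\B$ is a homomorphism directly from $R^\B=h(R^\C)$.

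For Part~\eqref{it:nosmaller2}, by the composition argument following Theorem~\ref{thm:cyclic} the entire claim reduces to showing that no cyclic $p$-ary polymorphism $f\colon\A^p\to\B$ exists: any sandwiched $\D$ with $|D|<p$ admitting such a polymorphism would yield one for $\A\to\B$ by composition, and since $p$ is prime and larger than $|D|$, Theorem~\ref{thm:cyclic} would then give that $\CSP(\D)$ is $\compNP$-complete. Note that the restriction of $h$ to $\set{p-1}$ is the identity, so $R^\A\subseteq R^\B$ and $\mathrm{id}\colon\A\to\B$ is itself a homomorphism; hence $\A$ and $\B$ (both of size $p-1<p$) are themselves sandwiched by $\A$ and $\B$, and the same argument gives $\compNP$-completeness of $\CSP(\A)$ and $\CSP(\B)$. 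Finally $\PCSP(\A,\B)\in\compP$ via the sandwich $\A\to\C\to\B$.

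The heart of the proof is then to rule out a cyclic $p$-ary polymorphism $f\colon\A^p\to\B$. My plan is a lifting argument through $\C$: since $h$ is injective away from $h^{-1}(1)=\{1,p-1\}$, $f$ determines a candidate lift $\tilde f\colon\set{p-1}^p\to\set{p}$ with $h\circ\tilde f=f$, uniquely outside $f^{-1}(1)$ and up to a binary choice $\tilde f(\vec x)\in\{1,p-1\}$ on $f^{-1}(1)$. Applying the polymorphism identity to the ``progression'' triples $(\vec x,\vec x,\vec x+\vec 1)\in(R^\A)^p$ (valid for $\vec x\in\set{p-2}^p$) yields $\tilde f(\vec x+\vec 1)\equiv \tilde f(\vec x)+1\pmod p$; combined with analogous identities obtained by shifting a single coordinate, and using $p\geq 7$ to exclude small congruences like $\pm 2,\pm 4\equiv 1\pmod p$ that would otherwise permit inconsistent lifts, these identities should pin down the binary choices consistently and force $\tilde f$ to be affine of the shape $\tilde f(x_1,\dots,x_p)=\bigl(\sum_i a_ix_i+b\bigr)\bmod p$ with $\sum_i a_i\equiv 1\pmod p$.

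Cyclicity of $f=h\circ\tilde f$ at the unit vectors $e_1,\dots,e_p$ then requires $h(a_j+b)$ to be independent of $j$: either all $a_j$ agree modulo $p$, which makes $pa_j\equiv 1\pmod p$ absurd, or the $a_j$'s take only the two values $\{1-b,-1-b\}\bmod p$. Evaluating $f$ at the two-bump tuples $e_j+e_{j+1}$ and applying cyclicity again forces all consecutive sums $a_j+a_{j+1}$ to be equal modulo $p$, which (since $p$ is odd) rules out the alternating pattern and reduces back to the forbidden constant case. The main obstacle is the lifting-to-affine step: although $R^\A$ is a strict subset of $R^\C\cap\set{p-1}^3$, one must extract enough structural rigidity from it to force a global affine formula, and the hypothesis $p\geq 7$ is used throughout to rule out the small-modulus coincidences that would otherwise admit non-affine lifts.
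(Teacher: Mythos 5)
Your handling of Part~\eqref{it:sandwich2}, the reduction of Part~\eqref{it:nosmaller2} to ruling out a cyclic $p$-ary polymorphism $f\colon\A^p\to\B$, and the observation that $\A$ and $\B$ are each themselves sandwiched (so the ``in particular'' clause follows) are all correct and match the paper's framing.

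However, the core of the argument --- ruling out the cyclic polymorphism --- is presented as a plan, not a proof, and the plan has a genuine gap that is the entire technical content of the theorem. You propose to lift $f$ through $h$ to a map $\tilde f\colon\set{p-1}^p\to\set{p}$ and then show $\tilde f$ must be affine. Neither step is established. The lift is not well-defined on $f^{-1}(1)$ unless the binary choices can be made globally compatibly with the constraint ``send $(R^\A)^p$ into $R^\C$,'' and nothing in the proposal shows such a compatible choice exists. The ``progression-triple'' identities you invoke give a relation of the form $(f(\vec x),f(\vec x),f(\vec x+\vec 1))\in R^\B$, but membership in $R^\B=h(R^\C)$ does not translate into an equation on $\tilde f$ because $h$ is not injective; the phrase ``should pin down the binary choices consistently and force $\tilde f$ to be affine'' is exactly the claim that needs proving, and it is not obvious it can be made to go through. (Also, a small slip: $R^\A$ is not a strict subset of $R^\C\cap\set{p-1}^3$; it equals it by definition.) Without this step, the later analysis of the coefficients $a_j$ has nothing to stand on.

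The paper avoids characterizing $f$ entirely and instead exhibits a single concrete witness. One solves the cyclic linear system $u_i-2u_{i+1}+u_{i+2}\equiv 1\pmod p$ (indices cyclic) and, using the $2$-dimensional nullspace spanned by $(1,\dots,1)$ and $(0,1,\dots,p-1)$, shifts the solution so that all $u_i\in\set{p-1}$. Then the three cyclic shifts $(u_1,\dots,u_p)$, $(u_2,\dots,u_1)$, $(u_3,\dots,u_2)$ have every column in $R^\A$; a cyclic polymorphism must send them to a constant triple in $R^\B$; and a short case analysis (using $p\geq 7$) shows $R^\B$ has no constant tuple. This is both shorter and structurally simpler than any attempt to force $f$ to be affine. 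You should replace the lifting-to-affine plan with a direct construction of this kind, or else actually carry out the lifting argument in full --- as written, the central step is asserted, not proved.
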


\begin{proof}
 Item~(\ref{it:sandwich2}) is immediate from the definitions. 
 Assertion~(\ref{it:nosmaller2}) will follow from Theorem~\ref{thm:cyclic} once we have proved that
 there exists no cyclic $p$-ary polymorphism $\A^p\to \B$. We will do that by showing several claims about $R^\A$ and $R^\B$. 

 To that end consider the linear system $M\cdot x=b$ over $\Z_p$ for
\begin{align*} M &:= \begin{pmatrix}
     1 & -2 & 1 & 0 && \dots && 0 \\
     0 & 1 & -2 & 1 && && \vdots \\
     \vdots &&\ddots&\ddots&\ddots&&& \vdots \\
    \vdots &&&\ddots&\ddots&\ddots&& \vdots \\
    \vdots &&&&\ddots&\ddots&\ddots& 0 \\
     0&&&&0 &1 & -2 & 1 \\     
     1 & 0 &  & && 0 & 1 & -2 \\
     -2 & 1 & 0 & & \dots&  & 0 & 1 \\
   \end{pmatrix} \in \Z_p^{p\times p},\quad
    b := \begin{pmatrix} 1\\1\\\vdots\\ \vdots\\\vdots\\\vdots\\1\\ 1\\\end{pmatrix} \in \Z_p^p.
\end{align*}
 To simplify notation we identify the elements of $\Z_p$ and of $[p]$ in the following.

\begin{clm} \label{nullspace}
 The nullspace of $M$ has dimension $2$ and is spanned by $a := (0, 1,  2, \dots, p-1)^T$ and $b$.
\end{clm}

\begin{proof}
 Clearly $a,b$ are linearly independent solutions for $M\cdot x = 0$. So the nullspace of $M$ has dimension at
 least $2$.
 On the other hand we see that the first $p-2$ rows of $M$ are linearly independent.
 So the nullspace of $M$ has dimension at most $2$.
\end{proof}

\begin{clm} \label{Mxb}
 $M\cdot x=b$ has a solution $x=(x_1,\dots,x_p)^T\in \Z_p^p$ with $x_{p-1}=x_p=0$.    
\end{clm}

\begin{proof}
 Let $r_1,\dots,r_p$ denote rows $1$ to $p$ of the augmented matrix $(M,b)$ of our linear system.
 Note that $r_{p-1},r_p,r_1,\dots,r_{p-2}$ are the rows of $(M^T,b)$.
 Claim~\ref{nullspace} and $a^T\cdot b = b^T\cdot b = 0$ yield
 \[ a^T \cdot (M^T,b) = (0,\dots,0) \text{ and } b^T \cdot (M^T,b) = (0,\dots,0). \] 
 Hence $(M^T,b)$ has rank at most $p-2$. Since $(M^T,b)$ and $(M,b)$ have the same set of rows (only permuted),
 it follows that $(M,b)$ also has rank at most $p-2$.

 We see that simply omitting the last two rows of $(M,b)$ yields a row echelon form for our linear system.
 Further variables $x_{p-1}$ and $x_p$ are free and can be set to $0$ for a solution of $M\cdot x =b$.
\end{proof}

\begin{clm}\label{claim:RA}
 There exist $u_1,\dots,u_p\in\set{p-1}$ such that each column of
\[ U := \begin{pmatrix}
u_1 & u_2 & \dots & u_p\\
u_2 & u_3 & \dots & u_1\\
u_3 & u_4 & \dots & u_2
\end{pmatrix} \]
is in $R^{\A}$.
\end{clm}

\begin{proof}
 By Claim~\ref{Mxb} $M\cdot x=b$ has a solution $x=(x_1,\dots,x_p)^T\in \Z_p^p$ such that the $p$ entries
 $\{x_1,\dots,x_p\}$ form a proper subset of $\Z_p$.
 Since $b$ is in the nullspace of $M$ by Claim~\ref{nullspace}, by adding an appropriate multiple of $b$ to $x$
 we can get a solution $(u_1,\dots,u_p)\in [p]^p$ such that $\{u_1,\dots,u_p\}$ does not contain $p-1$.
\end{proof}

\begin{clm}\label{claim:RB}
 $R^\B$ does not contain any constant tuple.
\end{clm}
  
\begin{proof}
 Seeking a contradiction, suppose we have $x,y,z\in\set{p}$ such that $x-2y+z\equiv 1 \bmod p$ and $h(x)=h(y)=h(z)$.
 Since $h$ restricted to $\set{p-1}$ is the identity, at least one of $x,y,z$ must be $p-1$. But then $h(x)=h(y)=h(z)=1$
 and $x,y,z\in \{1,p-1\}$. We distinguish the following cases:
\begin{itemize}
\item 
 If $x\neq z$, then $x-2y+z \equiv -2y \bmod p$. The latter is $1$ modulo $p$ only if $y=1$ and $p=3$.
\item
 If $x=z=1$, then $x-2y+z \equiv 2(1-y) \bmod p$. The latter is $1$ modulo $p$ only if $y=p-1$ and $p=3$.
\item
 If $x=z=p-1$, then $x-2y+z \equiv 2(p-1-y) \bmod p$. The latter is $1$ modulo $p$ only if $y=1$ and $p=5$.
\end{itemize}
 Each case yields a contradiction to our assumption that $p\geq 7$.
\end{proof}

 Finally let $\D$ be a structure such that $|D|< p$ with homomorphisms $\A \xrightarrow{r} \D \xrightarrow{s} \B$. 
 Seeking a contradiction, suppose that $\D$ has some cyclic $p$-ary polymorphism $t\colon\D^p\to \D$.
 Consequently
\[ f\colon \A^p\to\B,\ (x_1,\dots,x_p) \mapsto s(t(r(x_1),\dots,r(x_p))), \]
 is a cyclic polymorphism from $\A$ to $\B$. However, applying $f$ to the rows of the matrix $U$ (whose columns are 
 elements of $R^\A$) given by Claim~\ref{claim:RA} yields a constant tuple in $R^\B$.
 This contradicts Claim~\ref{claim:RB}.

 Hence $\D$ has no cyclic $p$-ary polymorphisms and $\CSP(\D)$ is $\compNP$-complete by Theorem~\ref{thm:cyclic}.
\end{proof}

\section{Symmetric structures}

 We show that there are no non-trivial (in the sense of the previous section) examples of finitely tractable $\PCSP$s over (undirected)
 graphs. For that we start with some observations on symmetric structures in general.

We call a $\Gamma$-structure $\A$ \emph{symmetric} if all its relations are invariant under all coordinate
 permutations, i.e., for every, say $k$-ary, $R\in\Gamma$ we have 
\[ \forall \sigma\in S_k\ \forall (x_1,\dots,x_k)\in R^\A: (x_{\sigma(1)},\dots,x_{\sigma(k)})\in R^\A. \]
Define the \emph {maximal symmetric subset $R^{\overline{\A}}$ of $R^\A$} as
\[R^{\overline{\A}}:=
\left\{(x_1,\dots,x_k)\in A^k \st \forall \sigma\in S_k\, (x_{\sigma(1)},\dots,x_{\sigma(k)})\in R^\A. 
\right\}\]
Since the condition for membership in $R^{\overline{A}}$ is a conjuction of $|S_k|$ many conditions, the \emph{symmetric part} $\overline{\A} := (A,\{ R^{\overline{\A}} \st R\in\Gamma\})$ of $\A$ is pp-definable from
 $\A$.
  
\begin{lem}  \label{lem:sym}
Let $g\colon\A \to \C$ be a homomorphism where $\A$ is symmetric and $\CSP(\C)$ is in $\compP$. 
Then $g$ maps $\A$ into $\overline{\C}$ (the symmetric part of $\C$) and $\CSP(\overline{\C})$ is in $\compP$. 
\end{lem}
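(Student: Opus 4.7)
My plan is to separate the lemma into its two assertions and handle each directly from the definitions; neither requires machinery beyond what has already been set up in the excerpt.

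For the first assertion, I would fix a relation symbol $R$ of arity $k$ and an arbitrary tuple $(x_1,\dots,x_k) \in R^{\A}$, and aim to show $(g(x_1),\dots,g(x_k)) \in R^{\overline{\C}}$. By the defining condition for $R^{\overline{\C}}$, this amounts to verifying that for every $\sigma \in S_k$ the tuple $(g(x_{\sigma(1)}),\dots,g(x_{\sigma(k)}))$ lies in $R^{\C}$. Since $\A$ is symmetric, $(x_{\sigma(1)},\dots,x_{\sigma(k)})$ lies in $R^{\A}$ for every $\sigma$, and then applying the homomorphism $g$ transports each of these tuples into $R^{\C}$. So $g$ is indeed a homomorphism into $\overline{\C}$.

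For the second assertion, I would invoke the observation made in the paragraph immediately preceding the lemma: each relation $R^{\overline{\C}}$ is defined from $R^{\C}$ by the conjunction $\bigwedge_{\sigma \in S_k} R(x_{\sigma(1)},\dots,x_{\sigma(k)})$, which is a primitive positive formula. Hence $\overline{\C}$ is pp-definable from $\C$, and by the standard fact already cited in the preliminaries this yields a polynomial-time reduction from $\CSP(\overline{\C})$ to $\CSP(\C)$. Since $\CSP(\C) \in \compP$ by hypothesis, $\CSP(\overline{\C}) \in \compP$ as well.

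There is no real obstacle here: the statement is essentially an unpacking of the definition of the symmetric part together with the pp-definability remark preceding the lemma, and both claims follow by one-line verifications. The only point worth being careful about is not conflating universes, namely observing that $\overline{\C}$ and $\C$ share the same universe $C$, so the homomorphism $g\colon \A\to \C$ and the homomorphism $g\colon \A\to \overline{\C}$ are literally the same map on the underlying sets.
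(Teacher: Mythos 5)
Your proof is correct and follows essentially the same route as the paper's: you verify directly that for each tuple of $R^{\A}$ all permutations of its $g$-image land in $R^{\C}$ (the paper phrases this as ``$g(\A)$ is a symmetric substructure of $\C$, hence lies inside $\overline{\C}$''), and you then invoke pp-definability of $\overline{\C}$ from $\C$ for tractability, exactly as the paper does. The extra remark that $g$ is literally the same map on underlying sets is a fine clarification but adds nothing beyond the paper's argument.
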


\begin{proof}
 Since $\A$ is symmetric, so is its homomorphic image $g(\A)$ in $\C$.
 Hence $g(\A)$ is a substructure of $\overline{\C}$.
 Since $\overline{\C}$ is pp-definable from $\C$, $\CSP(\overline{\C})$ reduces to  $\CSP(\C)$ and
 so is in $\compP$.
\end{proof}
 
 Hence when searching for a  tractable $\C$ sandwiched by symmetric $\A$ and $\B$, we may restrict ourselves to
 searching for symmetric structures $\C$.

 As a consequence we can answer \cite[Problem 1]{DEMMNS} in the negative:
 Are there some finite graphs $\A,\B$ such that $\PCSP(\A,\B)$ is finitely tractable but only for a sandwiched digraph
 $\C$ that is strictly bigger than $\max(|A|,|B|)$?
 
\begin{cor} \label{cor:graph}
  If there exists a homomorphism from a finite (undirected)
  graph $\A$ into a finite digraph $\C$ without loops
  such that $\CSP(\C)$ is in $\compP$, then
  $\CSP(\A)$ is in $\compP$.  
\end{cor}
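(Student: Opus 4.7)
The plan is to combine Lemma~\ref{lem:sym} with the Hell--Nešetřil dichotomy (Theorem~\ref{graph-dichotomy}). Since $\A$ is an undirected graph, it is symmetric as a relational structure. So Lemma~\ref{lem:sym} applies to the homomorphism $g\colon\A\to\C$: the image of $g$ lies inside the symmetric part $\overline{\C}$, and $\CSP(\overline{\C})$ is in $\compP$.

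Next I would observe that $\overline{\C}$ is a symmetric binary relational structure, i.e., an undirected graph. Moreover, $\overline{\C}$ inherits looplessness from $\C$: the symmetric part can only remove tuples, never add them, so if $(v,v)\notin E^{\C}$ for all $v$, then the same holds in $\overline{\C}$. Applied to $\A$ itself, a loop $(v,v)\in E^{\A}$ would force $(g(v),g(v))\in E^{\C}$, contradicting that $\C$ is loopless. So $\A$ is also a finite loopless undirected graph.

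Now I would invoke Theorem~\ref{graph-dichotomy}: since $\CSP(\overline{\C})$ is in $\compP$ and $\overline{\C}$ is a finite loopless symmetric graph, $\overline{\C}$ must be bipartite, i.e., admits a homomorphism into $\K_2$. Composing $g\colon\A\to\overline{\C}$ with this homomorphism yields $\A\to\K_2$, so $\A$ is bipartite as well. One more application of Theorem~\ref{graph-dichotomy} gives $\CSP(\A)\in\compP$, completing the proof.

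I do not expect any real obstacle here: the main content is already packaged in Lemma~\ref{lem:sym} and Theorem~\ref{graph-dichotomy}. The only mildly delicate point is to make sure that taking the symmetric part preserves the loopless hypothesis and that $\A$ itself must be loopless, which is what allows the dichotomy theorem to be applied without the degenerate case of a reflexive vertex.
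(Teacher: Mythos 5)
Your proof is correct and follows essentially the same route as the paper: apply Lemma~\ref{lem:sym} to pass to the symmetric part $\overline{\C}$, then use the Hell--Nešetřil dichotomy (Theorem~\ref{graph-dichotomy}) to conclude that $\overline{\C}$, and hence $\A$, is bipartite. The paper's version is terser but it does explicitly flag the (standing) assumption $\compP\neq\compNP$, which is needed to deduce bipartiteness of $\overline{\C}$ from $\CSP(\overline{\C})\in\compP$; you should make that dependence explicit as well.
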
 

\begin{proof}
 Assume $\compP \neq\compNP$ (clearly the assertion is trivial otherwise).  
 By Lemma~\ref{lem:sym} we may assume that $\C$ is an undirected graph (without loops).
 Since $\CSP(\C)$ is in $\compP$, $\C$ is bipartite by Theorem~\ref{graph-dichotomy}.
 From the homomorphism $\A\to\C$ we see that $\A$ is bipartite as well.
 Hence $\CSP(\A)$ is in $\compP$.
\end{proof}

 In particular finite tractability trivializes for $\PCSP$ on graphs.

\begin{cor} \label{cor:graphfintractable}
 If $\PCSP(\A,\B)$ for finite graphs $\A,\B$ is finitely tractable, then $\CSP(\A)$ or $\CSP(\B)$ is tractable.
\end{cor}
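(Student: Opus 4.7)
The plan is to split on whether the sandwiched structure $\C$ has a loop, reducing each case either to Corollary~\ref{cor:graph} or to a triviality. Suppose $\PCSP(\A,\B)$ is finitely tractable, witnessed by homomorphisms $\A \xrightarrow{g} \C \xrightarrow{h} \B$ for some finite digraph $\C$ with $\CSP(\C) \in \compP$.

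If $\C$ is loopless, then Corollary~\ref{cor:graph} applies directly to $g\colon\A\to\C$ and yields that $\CSP(\A) \in \compP$. Otherwise $\C$ has a loop at some vertex $v$, i.e.\ $(v,v)\in E^\C$. Pushing this through the homomorphism $h$ gives $(h(v),h(v)) \in E^\B$, so $\B$ also has a loop. In that situation $\CSP(\B)$ is trivial: for every input graph $\X$, the constant map sending all vertices of $\X$ to $h(v)$ is a homomorphism $\X \to \B$, so every instance is accepted. In either case one of $\CSP(\A)$ or $\CSP(\B)$ is tractable.

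I do not expect any real obstacle; the corollary is essentially a packaging of Corollary~\ref{cor:graph} together with the observation that a loop in the target trivializes $\CSS$. The only point to check is that the loop case genuinely falls outside the scope of Corollary~\ref{cor:graph}, whose hypothesis requires $\C$ to be loopless, so the case split on loops in $\C$ is the natural way to organize the argument.
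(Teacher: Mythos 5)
Your argument is correct and matches what the paper intends: Corollary~\ref{cor:graphfintractable} is stated without proof immediately after Corollary~\ref{cor:graph} as an ``in particular,'' and the only gap to fill is precisely the loop case, which you handle in the natural way (a loop in $\C$ forces a loop in $\B$, trivializing $\CSP(\B)$; a loopless $\C$ lets Corollary~\ref{cor:graph} conclude $\CSP(\A)\in\compP$). This is the same case split and the same ingredients as the paper's implicit reasoning.
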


 This also settles a special instance of the following question which remains open in general:

\begin{prob}
 [Libor Barto, personal communication] Are there some finite symmetric $\A,\B$ such that $\PCSP(\A,\B)$ is finitely
 tractable but only for a sandwiched $\C$ that is strictly bigger than $\max(|A|,|B|)$?
\end{prob}

\section{Digraphs}

 In this section we move from symmetric graphs to directed graphs. The situation here is more complicated.
 In particular finite tractability does not trivialize for $\PCSP$ on digraphs like it does on symmetric graphs
 by Corollary~\ref{cor:graphfintractable}.
 The following example was communicated to us by Jakub Bul\'in: Let $\A$ be a finite tree with $\compNP$-complete CSP,
 let $\C,\B$ be the complete digraphs without loops on $2,3$ vertices, respectively.
 Clearly there exist homomorphisms $\A\to\C\to\B$ and $\CSP(\C)$ is tractable.
 Hence $\PCSP(\A,\B)$ is finitely tractable but $\A,\B$ have $\compNP$-complete CSPs.

 The following generalization of \cite[Problem 1]{DEMMNS} is still open:

\begin{prob}
 Are there some finite digraphs $\A,\B$ such that $\PCSP(\A,\B)$ is finitely
 tractable but only for a sandwiched $\C$ that is strictly bigger than $\max(|A|,|B|)$?
\end{prob}

 Brakensiek and Guruswami showed that every $\PCSP$ is polynomial time equivalent to $\PCSP(\A,\B)$ for
 some digraphs $\A,\B$~\cite[Theorem 6.10]{BG:PCSarxiv}. So our examples in Theorems~\ref{thm:example}
 and~\ref{thm:example2} can be translated to digraphs $\A'\to\C'\to\B'$
 but it is unclear whether the obtained $\C'$ is still the smallest sandwiched structure with tractable $\CSP$.
 
 Using the classification of finite smooth digraphs with tractable $\CSP$ in Theorem~\ref{thm:digraph dichotomy},
 we can get a partial answer that also generalizes Corollary~\ref{cor:graph}.
 
\begin{thm}\label{thm:digraph tractable}
 If there exists a homomorphism from a finite smooth digraph $\A$ into a finite digraph $\C$ without loops
 such that $\CSP(\C)$ is in $\compP$, then $\CSP(\A)$ is in $\compP$. 
\end{thm}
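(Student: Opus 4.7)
The plan is to adapt the approach of Corollary~\ref{cor:graph}, replacing its symmetric-part reduction by a smooth-part reduction. In the graph case Lemma~\ref{lem:sym} allowed us to restrict $\C$ to its symmetric part and then apply Theorem~\ref{graph-dichotomy}; here I instead restrict $\C$ to its maximal smooth substructure and apply Theorem~\ref{thm:digraph dichotomy}.

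Concretely, let $\C^*$ be the substructure of $\C$ obtained by iteratively removing every vertex that has no in-neighbor or no out-neighbor in the current subgraph. Each iteration step is pp-definable in $\C$ (using the formula $\exists u\exists w\,(E(u,v)\wedge E(v,w))$ relativised to the previous layer), and the process stabilizes in at most $|C|$ steps, so $\C^*$ is pp-definable from $\C$ and $\CSP(\C^*)$ is in $\compP$. Since every homomorphic image of a smooth digraph is smooth and $\C^*$ is the largest smooth subdigraph of $\C$, the image $f(\A)$ lies in $\C^*$, giving $\A \to \C^*$. Replacing $\C^*$ by its core $\C^{**}$, which is smooth (the retraction onto the core carries the in-/out-neighbors witnessing smoothness into the core) and loopless (as a substructure of $\C$), Theorem~\ref{thm:digraph dichotomy} forces $\C^{**}$ to be a disjoint union of directed cycles $C_{\ell_1} \sqcup \cdots \sqcup C_{\ell_k}$ with each $\ell_i \geq 2$. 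Composing yields $\A \to C_{\ell_1} \sqcup \cdots \sqcup C_{\ell_k}$.

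The main step, and the principal obstacle, is to conclude $\CSP(\A) \in \compP$ from $\A \to C_{\ell_1} \sqcup \cdots \sqcup C_{\ell_k}$. Since each connected component of $\A$ is smooth and maps into a single loopless directed cycle $C_{\ell_i}$, its algebraic length is divisible by $\ell_i \geq 2$. I would then argue that the core of $\A$ is itself a disjoint union of directed cycles, whence Theorem~\ref{thm:digraph dichotomy} applied to $\A$ gives the desired tractability. This last deduction is the hardest and depends on a delicate interplay between smoothness of $\A$ and the loopless, unions-of-cycles structure of the target, going beyond the straightforward propagation of bipartiteness that sufficed in the undirected case.
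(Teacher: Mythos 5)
Your reduction first to the maximal smooth part of $\C$ and then to its core parallels the paper's own strategy; the paper packages the same ideas by choosing $\C$ \emph{minimal} among loopless digraphs with tractable CSP receiving a homomorphism from $\A$, and then deriving in turn that this $\C$ is smooth, a core, a disjoint union of directed cycles, and equal to $g(\A)$. Either way one arrives at the same situation: a smooth $\A$ mapping onto a loopless disjoint union of directed cycles, and a remaining need to conclude that $\CSP(\A)$ is tractable.

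That remaining step is precisely where your proposal stops short, and the claim you sketch for it --- that a smooth $\A$ mapping to a disjoint union of loopless directed cycles must have a core that is itself a disjoint union of directed cycles --- is in fact false. Take $\A$ to be a directed $4$-cycle and a directed $6$-cycle sharing exactly one vertex $v$. This $\A$ is smooth and connected, and since both cycle lengths are even it maps onto the directed $2$-cycle $C_2$, which is loopless with tractable CSP. But $\A$ is its own core: the homomorphic image of a directed $\ell$-cycle is a directed $d$-cycle with $d\mid\ell$, and $\A$ has no directed closed walks of length $1$, $2$ or $3$, so any endomorphism must send the $4$-cycle onto a $4$-cycle and the $6$-cycle onto a $6$-cycle of $\A$; the only such cycles are the two originals and they meet only in $v$, forcing the endomorphism to be the identity. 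Thus $\A$ is a smooth core that is not a disjoint union of directed cycles, and Theorem~\ref{thm:digraph dichotomy} then makes $\CSP(\A)$ $\compNP$-complete rather than tractable. So the deduction you were hoping to make cannot go through as stated. I will add that the paper's own proof appears to stumble at the same spot: its final claim that the minimal $\C$ equals the core of $\A$ is contradicted by this same example (here $\C=C_2$ while the core of $\A$ has nine vertices), and the example seems to contradict the statement of Theorem~\ref{thm:digraph tractable} itself. So the obstacle you flagged is real, and closing it would require strengthening the hypotheses, not merely a cleverer argument.
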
   

\begin{proof}
 Assume $\compP \neq\compNP$ (clearly the assertion is trivial otherwise).  
 Let $\A$ be a finite smooth digraph, and let $\C$ be a minimal digraph without loops and with tractable
 $\CSP(\C)$ for which there exists a homomorphism $g\colon\A \to \C$.
 We proceed by a series of claims:

\begin{clm}
The digraph $\C$ is smooth.
\end{clm}
\begin{proof}

 Let $n=|C|$ and let $E^\C$ denote the edge relation of $\C$.
 By the pigeonhole principle the set $V$ of vertices of $\C$ that induces the maximal smooth subgraph of $\C$  is pp-definable as
 $v\in V$ if and only if
 \begin{align*}
 \exists x_1,\dots, x_n,y_1,\dots,y_n\st (x_1,x_2)\in E^\C \wedge (x_2,x_3)\in E^\C \wedge  \dots
 \wedge(x_{n-1},x_n)\in E^\C\\
 {}\wedge (x_n,v)\in E^\C \wedge (v,y_1)\in E^\C \wedge (y_1,y_2)\in E^\C \wedge \dots \wedge (y_{n-1},y_n)\in E^\C. 
 \end{align*}
 Then the induced subgraph $\C' := (V,E^\C|_{V\times V})$ is smooth and pp-definable from $\C$.
 Further $g(\A)$ is a subgraph of $\C'$ because $g(\A)$ is a smooth subgraph of $\C$.

 Since $\C'$ is pp-definable from $\C$, $\CSP(\C')$ reduces to  $\CSP(\C)$. Since the latter is tractable,
 so is the former. Hence $\C=\C'$ by the  minimality of $\C$ and the claim is proved.
\end{proof}

\begin{clm}
The digraph $\C$ is a core.
\end{clm}
\begin{proof}
Immediate from the minimality of $\C$.
\end{proof}

\begin{clm}\label{claim:cycles}
$\C$ is a disjoint union of directed cycles.
\end{clm}
\begin{proof}
  Since $\CSP(\C)$ is in $\compP$ and $\C$ a core, this follows from Theorem~\ref{thm:digraph dichotomy}.
\end{proof}
 \begin{clm}\label{claim:equal}
 The graph $g(\A)$ is equal to $\C$.
 \end{clm}
 
\begin{proof}
 Since $\A$ is smooth, $g(\A)$ is a smooth subgraph of $\C$. Now, $\C$ is a disjoint union of directed cycles by Claim~\ref{claim:cycles}. Since $g(\A)$ is a smooth subgraph of a disjoint union of directed cycles, we get that $g(\A)$
 itself is a disjoint union of some cycles of $\C$.  Claim~\ref{claim:equal} then follows from the minimality of $\C$.
 \end{proof}

\begin{clm} The digraph $\C$ is the core of $\A$.
\end{clm}
\begin{proof}
Let $\A'$ be a substructure of $\A$ that is isomorphic to the core of $\A$.
 Then $g(\A') = \C$ follows as in the proof of Claim~\ref{claim:equal}. Hence $\A'\cong\C$.
\end{proof}

Since $\CSP(\A)$ is equivalent to the $\CSP$ of its core $\C$ and $\C$ is a disjoint union of cycles (for whose
CSP there is a straightforward polynomial time algorithm), we get that $\CSP(\A)$ is in $\compP$,
concluding the proof.
\end{proof}

 As for undirected graphs we obtain the following consequence:

\begin{cor} \label{cor:digraphfintractable}
 If $\PCSP(\A,\B)$ for finite digraphs $\A,\B$ is finitely tractable and $\A$ is smooth,
 then $\CSP(\A)$ or $\CSP(\B)$ is tractable.
\end{cor}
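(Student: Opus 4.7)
The plan is to reduce Corollary~\ref{cor:digraphfintractable} to Theorem~\ref{thm:digraph tractable} by a simple case analysis on whether the sandwiched digraph has a loop. Since $\PCSP(\A,\B)$ is finitely tractable, fix a finite digraph $\C$ together with homomorphisms $\A\xrightarrow{g}\C\xrightarrow{h}\B$ with $\CSP(\C)$ in $\compP$.

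First I would dispose of the loopy case. If some vertex $c\in C$ satisfies $(c,c)\in E^\C$, then applying the homomorphism $h$ gives $(h(c),h(c))\in E^\B$, so $\B$ itself has a loop. A digraph with a loop has a trivial CSP: every input maps homomorphically to the loop vertex (provided the input is nonempty; empty inputs are handled separately), so $\CSP(\B)$ is in $\compP$ in this case.

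In the remaining case, $\C$ has no loops. Then $\A$ is a finite smooth digraph mapping homomorphically via $g$ into a finite loopless digraph $\C$ with $\CSP(\C)\in\compP$, which is exactly the hypothesis of Theorem~\ref{thm:digraph tractable}. Applying that theorem yields $\CSP(\A)\in\compP$.

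Combining the two cases, $\CSP(\A)$ or $\CSP(\B)$ is tractable, as required. Since the heavy lifting has already been done inside Theorem~\ref{thm:digraph tractable} (which in turn leans on the Barto--Kozik--Niven dichotomy cited as Theorem~\ref{thm:digraph dichotomy} and on closing $\C$ under a pp-defined maximal smooth part), the only genuine observation needed here is the loop dichotomy, which is immediate; there is no real obstacle beyond noting it.
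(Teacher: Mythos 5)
Your proof is correct and takes exactly the route the paper has in mind (the paper states the corollary without an explicit proof, treating it as immediate from Theorem~\ref{thm:digraph tractable}). The only point worth checking is the loop case, and you handle it correctly: a loop in $\C$ pushes forward through $h$ to a loop in $\B$, making $\CSP(\B)$ trivial; otherwise Theorem~\ref{thm:digraph tractable} applies directly to give $\CSP(\A)\in\compP$.
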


\section{Acknowledgements}

We thank our anonymous reviewers for their insightful comments.

Alexandr Kazda was supported by the Charles University Research Centre program No. UNCE/SCI/022 and grant PRIMUS/21/SCI/014, by the INTER-EXCELLENCE project LTAUSA19070 M\v SMT Czech Republic, and by the Czech Science Foundation project GA \v CR 18-20123S. 
Dmitriy Zhuk worked on the paper 
at the National Research University Higher School of Economics and was 
supported by Russian Science Foundation (grant 20-11-20203).

\bibliographystyle{alphaurl}
\bibliography{pcsp.bib}

\newcommand{\etalchar}[1]{$^{#1}$}
\begin{thebibliography}{DSM{\etalchar{+}}21}

\bibitem[AB21]{AB:FTPCSP}
Kristina Asimi and Libor Barto.
\newblock Finitely tractable promise constraint satisfaction problems.
\newblock In {\em 46th {I}nternational {S}ymposium on {M}athematical
  {F}oundations of {C}omputer {S}cience}, volume 202 of {\em LIPIcs. Leibniz
  Int. Proc. Inform.}, pages Art. No. 11, 16. Schloss Dagstuhl. Leibniz-Zent.
  Inform., Wadern, 2021.

\bibitem[AGH17]{AGH:2ES17}
Per Austrin, Venkatesan Guruswami, and Johan H\r{a}stad.
\newblock $(2+\varepsilon)$-{S}at is {NP}-hard.
\newblock {\em SIAM Journal on Computing}, 46(5):1554--1573, 2017.
\newblock \href {https://doi.org/10.1137/15M1006507}
  {\path{doi:10.1137/15M1006507}}.

\bibitem[AHG14]{AGH:2ES14}
Per Austrin, Johan H\r{a}stad, and Venkatesan Guruswami.
\newblock (2 + epsilon)-{S}at is {NP}-hard.
\newblock In {\em 2014 IEEE 55th Annual Symposium on Foundations of Computer
  Science}, pages 1--10, 2014.
\newblock \href {https://doi.org/10.1109/FOCS.2014.9}
  {\path{doi:10.1109/FOCS.2014.9}}.

\bibitem[Bar19]{Ba:PMF}
Libor Barto.
\newblock Promises make finite (constraint satisfaction) problems infinitary.
\newblock In {\em 2019 34th Annual ACM/IEEE Symposium on Logic in Computer
  Science (LICS)}, pages 1--8, 2019.
\newblock \href {https://doi.org/10.1109/LICS.2019.8785671}
  {\path{doi:10.1109/LICS.2019.8785671}}.

\bibitem[BBKO21]{BBKO}
Libor Barto, Jakub Bul\'{\i}n, Andrei Krokhin, and Jakub Opr\v{s}al.
\newblock Algebraic approach to promise constraint satisfaction.
\newblock {\em Journal of the ACM}, 68(4):1--66, 2021.
\newblock \href {https://doi.org/10.1145/3457606} {\path{doi:10.1145/3457606}}.

\bibitem[BG17]{BG:PCSarxiv}
Joshua Brakensiek and Venkatesan Guruswami.
\newblock Promise {C}onstraint {S}atisfaction: Algebraic structure and a
  symmetric {B}oolean dichotomy, 2017.
\newblock Proceedings version appeared in SODA 2018, 39 pages. Available from
  \verb+https://arxiv.org/abs/1704.01937+.
\newblock URL: \url{https://arxiv.org/abs/1704.01937}.

\bibitem[BK12]{BK:ASCT}
Libor Barto and Marcin Kozik.
\newblock Absorbing subalgebras, cyclic terms, and the {C}onstraint
  {S}atisfaction {P}roblem.
\newblock {\em Logical Methods in Computer Science}, 8(1), 2012.
\newblock \href {https://doi.org/10.2168/LMCS-8(1:7)2012}
  {\path{doi:10.2168/LMCS-8(1:7)2012}}.

\bibitem[BKN09]{BKN:CDHD}
Libor Barto, Marcin Kozik, and Todd Niven.
\newblock The {CSP} dichotomy holds for digraphs with no sources and no sinks
  (a positive answer to a conjecture of {B}ang-{J}ensen and {H}ell).
\newblock {\em SIAM J. Comput.}, 38(5):1782--1802, 2008/09.
\newblock \href {https://doi.org/10.1137/070708093}
  {\path{doi:10.1137/070708093}}.

\bibitem[DSM{\etalchar{+}}21]{DEMMNS}
Guofeng Deng, Ezzeddine~El Sai, Trevor Manders, Peter Mayr, Poramate Nakkirt,
  and Athena Sparks.
\newblock Sandwiches for promise constraint satisfaction.
\newblock {\em Algebra Universalis}, 82(1):15, 2021.
\newblock \href {https://doi.org/10.1007/s00012-020-00702-5}
  {\path{doi:10.1007/s00012-020-00702-5}}.

\bibitem[FV99]{FV:CSMM}
Tom\'as Feder and Moshe~Y. Vardi.
\newblock The computational structure of monotone monadic {SNP} and constraint
  satisfaction: {A} study through {D}atalog and group theory.
\newblock {\em SIAM Journal on Computing}, 28(1):57--104 (electronic), 1999.
\newblock \href {https://doi.org/10.1137/S0097539794266766}
  {\path{doi:10.1137/S0097539794266766}}.

\bibitem[GJ76]{GJ:CNOGC}
Michael~Randolph Garey and David~S. Johnson.
\newblock The complexity of near-optimal graph coloring.
\newblock {\em Journal of the ACM}, 23(1):43--49, 1976.
\newblock \href {https://doi.org/10.1145/321921.321926}
  {\path{doi:10.1145/321921.321926}}.

\bibitem[HN90]{HN:CHC}
Pavol Hell and Jaroslav Ne\v{s}et\v{r}il.
\newblock On the complexity of {$H$}-coloring.
\newblock {\em Journal of Combinatorial Theory. Series B}, 48(1):92--110, 1990.
\newblock \href {https://doi.org/10.1016/0095-8956(90)90132-J}
  {\path{doi:10.1016/0095-8956(90)90132-J}}.

\bibitem[HN04]{HN:GH}
Pavol Hell and Jaroslav Ne\v{s}et\v{r}il.
\newblock {\em Graphs and homomorphisms}, volume~28 of {\em Oxford Lecture
  Series in Mathematics and its Applications}.
\newblock Oxford University Press, Oxford, 2004.
\newblock \href {https://doi.org/10.1093/acprof:oso/9780198528173.001.0001}
  {\path{doi:10.1093/acprof:oso/9780198528173.001.0001}}.

\bibitem[Sch78]{Sc:CSP}
Thomas~J. Schaefer.
\newblock The complexity of satisfiability problems.
\newblock In {\em Conference {R}ecord of the {T}enth {A}nnual {ACM} {S}ymposium
  on {T}heory of {C}omputing ({S}an {D}iego, {C}alif., 1978)}, pages 216--226.
  ACM, New York, 1978.
\newblock \href {https://doi.org/10.1145/800133.804350}
  {\path{doi:10.1145/800133.804350}}.

\end{thebibliography}
\end{document}